\newtheorem{lemma}{Lemma}
\newtheorem{definition}{Definition}
\begin{document}

\title{Reduced ML-Decoding Complexity, Full-Rate STBCs for 4 Transmit Antenna Systems}
\author{
\authorblockN{K. Pavan Srinath}
\authorblockA{Dept of ECE, Indian Institute of science \\
Bangalore 560012, India\\
Email:pavan@ece.iisc.ernet.in\\
}
\and
\authorblockN{B. Sundar Rajan}
\authorblockA{Dept of ECE, Indian Institute of science \\
Bangalore 560012, India\\
Email:bsrajan@ece.iisc.ernet.in\\
}
}
\maketitle
\begin{abstract}
For an $n_t$ transmit, $n_r$ receive antenna system ($n_t \times n_r$ system), a {\it{full-rate}} space time block code (STBC) transmits $min(n_t,n_r)$ complex symbols per channel use. In this paper, a scheme to obtain a full-rate STBC for 4 transmit antennas and any $n_r$, with reduced ML-decoding complexity is presented.  The weight matrices of the proposed STBC are obtained from the unitary matrix representations of Clifford Algebra. By puncturing the symbols of the STBC, full rate designs can be obtained for $n_r < 4$. For any value of $n_r$, the proposed design offers the least ML-decoding complexity among known codes. The proposed design is comparable in error performance to the well known perfect code for 4 transmit antennas while offering lower ML-decoding complexity. Further, when $n_r < 4$, the proposed design has higher ergodic capacity than the punctured Perfect code. Simulation results which corroborate these claims are presented.
\end{abstract}

\section{Introduction and Background}
Complex orthogonal designs (CODs) \cite{TJC}, \cite{TiH}, although provide linear Maximum Likelihood (ML) decoding, do not offer a high rate of transmission. A full-rate code for an $n_t \times n_r$ MIMO system transmits $min(n_t,n_r)$ complex symbols per channel use. Among the CODs, only the Alamouti code for 2 transmit antennas is full-rate for a $2 \times 1$ MIMO system. A full-rate STBC can efficiently utilize all the degrees of freedom the channel provides. Generally, an increase in the rate also results in an increase in the ML-decoding complexity. The Golden code \cite{BRV} for 2 transmit antennas is an example of a full-rate STBC for any number of receive antennas. Until recently, the ML-decoding complexity of the Golden code was known to be of the order of $M^4$, where $M$ is the size of the signal constellation. However, it was shown in \cite{SrR_arxiv}, \cite{john_barry1} that the Golden code has a decoding complexity of the order of $M^2\sqrt{M}$ only. A lot of attention is being given to reducing the ML-complexity of full-rate codes. Current research focuses on obtaining high rate codes with reduced ML-decoding complexity (refer to Sec. \ref{sec2} for a formal definition), since high rate codes are essential to exploit the available degrees of freedom of the MIMO channel. For 2 transmit antennas, the Silver code \cite{HTW}, \cite{PGA}, is a full-rate code with full-diversity and an ML-decoding complexity of order $M^2$. For 4 transmit antennas, Biglieri et. al. proposed a rate-2 STBC which has an ML-decoding complexity of $M^4\sqrt{M}$ for square QAM without full-diversity \cite{BHV}. It was, however, shown that there was no significant reduction in error performance at low and medium SNR when compared with the then best known code - the DjABBA code \cite{HTW}. This code was obtained by multiplexing Quasi-orthogonal designs (QOD) for 4 transmit antennas \cite{JH}. 
Currently, the best full-rate STBC for $4\times2$ system with full diversity and an ML-decoding complexity of $M^4\sqrt{M}$ is the one given in \cite{SrR_arxiv}. This code was obtained by multiplexing the coordinate interleaved orthogonal designs (CIODs) for 4 transmit antennas \cite{ZS}. These results show that codes obtained by multiplexing low complexity STBCs can result in high rate STBCs with reduced ML-decoding complexity and without any significant degradation in the error performance when compared with the best existing STBCs. Such an approach has also been adopted in \cite{Robert} to obtained high rate codes from multiplexed orthogonal designs.

The well known full-rate STBC for 4 transmit antennas is the Perfect code \cite{ORBV}. It is full-diversity, full-rate, information-lossless and DMG optimal. On the negative side, its ML-decoding complexity is of the order of $M^{16}$. By puncturing the layers of the Perfect code, one can obtain full-rate designs for $n_r < 4$ receive antennas. However, for 2 receive antennas, Biglieri's code and  the code proposed in \cite{SrR_arxiv} beat the punctured Perfect code (puncturing refers to replacing certain symbols with zeros. Since the Perfect code has 4 layers, for $n_r < 4$, the symbols corresponding to $ 4-n_r$ layers are punctured.) in error performance \cite{BHV}, \cite{SrR_arxiv}, while having lower ML-decoding complexity as well. It is natural to ask if such similar advantages can be obtained for $n_r > 2$ receive antennas. In this paper, we answer this question in the affirmative by generalizing the result in \cite{SrR_arxiv} to any number of receive antennas. In particular, the contributions of this paper are:

\begin{enumerate}
\item We propose a full-rate STBC for 4 transmit antennas and any value of $n_r$. This is done by successively constructing full-rate STBCs for 2,3 and 4 receive antennas. The weight matrices of the STBCs are obtained from the unitary matrix representations of Clifford Algebras. For any $n_r$, the ML-decoding complexity of the proposed STBC is lower than that of the Perfect code by a factor of $M^3$ for non-regular QAM constellations (for square QAM, it is lower by a factor of $M$).
\item Like the Perfect code, the proposed code is information lossless for $n_r \geq 4$ receive antennas, while for lower number of receive antennas, the proposed code has higher ergodic capacity than the punctured Perfect code.
\item The proposed code has full-diversity and a better symbol error performance than the punctured Perfect code for 2 receive antennas at any SNR, while for the 3 and 4 receive antennas, although not a full-diversity STBC, its performance is similar to that of the perfect code in the low and medium SNR range. The reason for this is explained in Subsection \ref{subsec1}.

\end{enumerate}

The paper is organized as follows. In Section \ref{sec2}, we present the system model and the relevant definitions. The proposed code is presented in Section \ref{sec3} and the ML-decoding complexity and the ergodic capacity issues are discussed in Section \ref{sec4}. Simulation results are discussed in Section \ref{sec5} and the concluding remarks are made in Section \ref{sec6}.

\textit{\textbf{Notations}:} Throughout, bold, lowercase letters are used to denote vectors and bold, uppercase letters are used to denote matrices. Let $\textbf{X}$ be a complex matrix. Then, $\textbf{X}^{\dagger}$ denotes the Hermitian or the transpose of $\textbf{X}$, depending on whether $\textbf{X}$ is complex or real, resp., and $j$ represents $\sqrt{-1}$. The set of all real and complex numbers are denoted by $\mathbb{R}$ and $\mathbb{C}$, respectively. The real and the imaginary part of a complex number $x$ are denoted by $x_I$ and $x_Q$, respectively. $\Vert \textbf{X} \Vert$ denotes the Frobenius norm of $\textbf{X}$, and $\textbf{I}_T$ and $\textbf{O}_T$ denote the $T\times T$ identity matrix and the null matrix, respectively. The Kronecker product is denoted by $\otimes$. For a complex random variable $X$, $\mathcal{E}[X]$ denotes the mean of $X$.

For a complex variable $x$, the $\check{(.)}$ operator acting on $x$ is defined as follows.
\begin{equation*}
\check{x} \triangleq \left[ \begin{array}{rr}
                             x_I & -x_Q \\
                             x_Q & x_I \\
                            \end{array}\right].
\end{equation*}
The $\check{(.)}$ can similarly be applied to any matrix $\textbf{X} \in \mathbb{C}^{n \times m}$ by replacing each entry $x_{ij}$ by $\check{x}_{ij}$, $i=1,2,\cdots, n, j = 1,2,\cdots,m$ , resulting in a matrix denoted by $\check{\textbf{X}} \in \mathbb{R}^{2n \times 2m}$.

Given a complex vector $\textbf{x} = [ x_1, x_2, \cdots, x_n ]^T$, $\tilde{\textbf{x}}$ is defined as
\begin{equation*}
\tilde{\textbf{x}} \triangleq [ x_{1I},x_{1Q}, \cdots, x_{nI}, x_{nQ} ]^T.
\end{equation*}

\section{System Model}
\label{sec2}
We consider Rayleigh block fading MIMO channel with full channel state information (CSI) at the receiver but not at the transmitter. For $n_t \times n_r$ MIMO transmission, we have
\begin{equation}\label{Y}
\textbf{Y} = \sqrt{\frac{SNR}{n_t}}\textbf{HS + N}
\end{equation}

\noindent where $\textbf{S} \in \mathbb{C}^{n_t \times T}$ is the codeword matrix whose average energy is given by $\mathcal{E}(\Vert \textbf{S} \Vert^2) = n_tT$, transmitted over $T$ channel uses, $\textbf{N} \in \mathbb{C}^{n_r \times T}$ is a complex white Gaussian noise matrix with i.i.d entries $\sim
\mathcal{N}_{\mathbb{C}}\left(0,1\right)$ and $\textbf{H} \in \mathbb{C}^{n_r\times n_t}$ is the channel matrix with the entries assumed to be i.i.d circularly symmetric Gaussian random variables $\sim \mathcal{N}_\mathbb{C}\left(0,1\right)$. $\textbf{Y} \in \mathbb{C}^{n_r \times T}$ is the received matrix.

\begin{definition}\label{def1}$\left(\textbf{Code rate}\right)$ Code rate is the average number of independent information symbols transmitted per channel use. If there are $k$ independent complex information symbols in the codeword which are transmitted over $T$ channel uses, then, the code rate is $k/T$ complex symbols per channel use.
\end{definition}

\begin{definition}\label{def2}$\left(\textbf{Full-rate STBCs}\right)$ For an $n_t \times n_r$ MIMO system, if the code rate is $min\left(n_t,n_r\right)$, then the STBC is said to be \emph{\textbf{full-rate}}.
\end{definition}

 Assuming ML-decoding, the ML-decoding metric that is to be minimized over all possible values of codewords $\textbf{S}$ is given by
 \begin{equation}
\label{ML}
 \textbf{M}\left(\textbf{S}\right) = \Vert \textbf{Y} - \sqrt{\frac{SNR}{n_t}}{}\textbf{HS} \Vert_F^2
 \end{equation}

\begin{definition}\label{def3}$\left(\textbf{ML-Decoding complexity}\right)$
The ML decoding complexity is measured in terms of the maximum number of symbols that need to be jointly decoded in minimizing the ML decoding metric.
\end{definition}
For eg., if the codeword transmits $k$ independent symbols of which a maximum of $p$ symbols need to be jointly decoded, the ML-decoding complexity is of the order of $M^{p}$, where $M$ is the size of the signal constellation. If the code has an ML-decoding complexity of order less than $M^k$, the code is said to admit \emph{\textbf{reduced ML-decoding}}.

\begin{definition}\label{def4}$\left(\textbf{Generator matrix}\right)$ For any STBC that encodes $k$ information symbols, the \textbf{\emph{generator}} matrix $\textbf{G}$ is defined by the following equation  \cite{BHV}.
\begin{equation*}
\widetilde{vec\left(\textbf{S}\right)} = \textbf{G} \tilde{\textbf{s}},
\end{equation*}
\noindent where $\textbf{S}$ is the codeword matrix, $\textbf{s} \triangleq \left[ s_1, s_2,\cdots,s_k \right]^T$ is the information symbol vector.
\end{definition}

A codeword matrix of an STBC can be expressed in terms of \textbf{\emph{weight matrices}} (linear dispersion matrices) as follows \cite{HaH}.
\begin{equation*}
\textbf{S} = \sum_{i=1}^{k}s_{iI}\textbf{A}_{2i-1}+s_{iQ}\textbf{A}_{2i}.
\end{equation*}
Here, $\textbf{A}_i,i=1,2,\cdots,2k$ are the complex weight matrices for the STBC and should form a linearly independent set over $\mathbb{R}$. It follows that
\begin{equation*}
\textbf{G} = [\widetilde{vec(\textbf{A}_1)}\ \widetilde{vec(\textbf{A}_2)}\ \cdots \ \widetilde{vec(\textbf{A}_{2k})}].
\end{equation*}

\section{Code construction using Clifford Algebra}\label{sec3}
In this section, we show how the full-rate STBC with reduced ML-decoding complexity can be constructed using unitary matrix representations of Clifford algebras. This approach was first taken in \cite{HTW} to obtain a full-rate STBC for $4 \times 4 $ MIMO systems. But here, we look at designing an STBC so that it achieves reduced ML-decoding, acceptable error performance when compared with the best existing code, which is the Perfect code and has higher ergodic capacity than the punctured Perfect code for $n_r < 4$. We construct a full-rate STBC for any number of receive antennas by using a full-rate code for 1 receive antennas to successively construct full-rate codes for 2,3 and 4 receive antennas. The design is based on the following lemma.
\begin{lemma}\label{lemma_1}
If $n=2^m$ and matrices  $\textbf{F}_1, \cdots, \textbf{F}_{2m}$, which are of size $n \times n$, anticommute pairwise, then the set of products $\textbf{F}_1^{\lambda_1}\textbf{F}_2^{\lambda_2}\cdots\textbf{F}_{2m}^{\lambda_{2m}}$ with $\lambda_i \in \{0,1\}, i = 1,2,\cdots,2m$ forms a basis for the $2^{2m}$ dimensional space of all $n \times n$ matrices over $\mathbb{C}$.
\end{lemma}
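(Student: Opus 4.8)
My plan is to prove the statement by a dimension count combined with linear independence, establishing the latter through a trace (Hilbert--Schmidt) orthogonality argument. First I would observe that the space of all $n\times n$ complex matrices has dimension $n^2=(2^m)^2=2^{2m}$, which is exactly the number of products $P_\lambda := \textbf{F}_1^{\lambda_1}\cdots\textbf{F}_{2m}^{\lambda_{2m}}$ indexed by $\lambda=(\lambda_1,\dots,\lambda_{2m})\in\{0,1\}^{2m}$. Hence it suffices to show that these $2^{2m}$ products are linearly independent over $\mathbb{C}$. Throughout I would use that, being unitary representations of Clifford generators, the $\textbf{F}_i$ satisfy $\textbf{F}_i^2=\pm\textbf{I}$, so each $\textbf{F}_i$ is invertible with $\textbf{F}_i^{-1}=\pm\textbf{F}_i$; this is what keeps the finite family $\{P_\lambda\}$ closed under multiplication up to a scalar, and is the natural hypothesis supplied by the Clifford-algebra context rather than by bare anticommutation.

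The key step I would establish first is that every nontrivial product is traceless: if $\lambda\neq 0$ then $\mathrm{tr}(P_\lambda)=0$. Writing $P_\lambda=\textbf{F}_{i_1}\cdots\textbf{F}_{i_r}$ with $r\geq 1$ factors, I would exhibit an index $k$ for which $\textbf{F}_k P_\lambda \textbf{F}_k^{-1} = -P_\lambda$; since the trace is invariant under conjugation, this forces $\mathrm{tr}(P_\lambda)=-\mathrm{tr}(P_\lambda)=0$. The sign is computed by commuting $\textbf{F}_k$ through the $r$ factors using pairwise anticommutation: the result is $(-1)^{r}P_\lambda$ when $k\notin\{i_1,\dots,i_r\}$ and $(-1)^{r-1}P_\lambda$ when $k$ is one of them. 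Because $2m$ is even, for every $r$ with $1\le r\le 2m$ at least one of these choices of $k$ is available and yields the sign $-1$: take $k$ outside the support when $r$ is odd (possible since then $r<2m$), and take $k$ inside the support when $r$ is even. This parity bookkeeping is the part I expect to be the main obstacle, and it is precisely where the hypothesis $n=2^m$, which forces an even number $2m$ of generators, is essential.

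With tracelessness in hand, linear independence follows quickly. Suppose $\sum_{\lambda} c_\lambda P_\lambda = 0$. For a fixed $\mu$ I would left-multiply by $P_\mu^{-1}$ and take the trace:
\begin{equation*}
\sum_{\lambda} c_\lambda\,\mathrm{tr}\!\left(P_\mu^{-1}P_\lambda\right)=0.
\end{equation*}
Using anticommutation together with $\textbf{F}_i^2=\pm\textbf{I}$, the product $P_\mu^{-1}P_\lambda$ equals a nonzero scalar times $P_{\mu\oplus\lambda}$, where $\mu\oplus\lambda$ denotes the componentwise sum modulo $2$. When $\lambda=\mu$ this is $\pm\textbf{I}$, with trace $\pm n\neq 0$; when $\lambda\neq\mu$ it is a scalar multiple of a nontrivial product, whose trace vanishes by the previous step. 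Hence the sum collapses to a single term, giving $c_\mu\cdot(\pm n)=0$ and therefore $c_\mu=0$. Since $\mu$ is arbitrary, all coefficients vanish.

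Finally I would conclude that the $2^{2m}$ linearly independent products $P_\lambda$ span a space of dimension $2^{2m}$ and hence form a basis of the matrix algebra, as claimed. Equivalently, one may phrase the independence step as the observation that $\{P_\lambda\}$ is an orthogonal family under the Hilbert--Schmidt inner product $\langle \textbf{A},\textbf{B}\rangle=\mathrm{tr}(\textbf{A}^{\dagger}\textbf{B})$, since $\textbf{F}_i^{\dagger}=\pm\textbf{F}_i$ makes $P_\mu^{\dagger}P_\lambda$ a scalar times $P_{\mu\oplus\lambda}$; an orthogonal set of $2^{2m}$ nonzero vectors in a $2^{2m}$-dimensional inner product space is automatically a basis.
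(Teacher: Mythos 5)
Your proof is correct, but it takes a genuinely different route from the paper for the simple reason that the paper gives no argument at all: its ``proof'' of Lemma \ref{lemma_1} is a one-line citation of the Shapiro--Martin note \cite{anti_matric}. Your argument --- dimension count, tracelessness of every nontrivial product $P_\lambda$ via the conjugation sign $\textbf{F}_k P_\lambda \textbf{F}_k^{-1}=-P_\lambda$, and then Hilbert--Schmidt-type orthogonality to kill each coefficient --- is the standard self-contained proof, and your parity bookkeeping is exactly right: for odd $r$ an index outside the support exists because $2m$ is even, for even $r\geq 2$ an index inside the support works, and this is precisely why the claim fails for an odd number of generators (the full product $\textbf{F}_1\cdots\textbf{F}_{2m+1}$ commutes with every generator, so its trace cannot be forced to vanish). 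One point deserves emphasis, and you handled it honestly: bare pairwise anticommutation, which is all the lemma literally hypothesizes, is \emph{not} enough --- taking $\textbf{F}_1=\textbf{O}_n$ gives matrices that anticommute pairwise yet produce linearly dependent products, so some nondegeneracy assumption is unavoidable. The cited reference assumes the matrices are nonsingular; you assume the slightly stronger $\textbf{F}_i^2=\pm\textbf{I}$, which is what makes $P_\mu^{-1}P_\lambda$ a nonzero scalar multiple of $P_{\mu\oplus\lambda}$ and lets the orthogonality step close. Since the paper's four generators are unitary with $\textbf{F}_i^2=-\textbf{I}_4$ (easily checked from their explicit forms), your hypothesis holds in every use the paper makes of the lemma, so your proof fully serves the paper's purposes; it just proves a marginally narrower statement than the invertible-matrices version in \cite{anti_matric}.
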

\begin{proof}
Available in \cite{anti_matric}.
\end{proof}
As a byproduct of the lemma, the set $\{\textbf{F}_1^{\lambda_1}\textbf{F}_2^{\lambda_2}\cdots\textbf{F}_{2m}^{\lambda_{2m}}$, $ j\textbf{F}_1^{\lambda_1}\textbf{F}_2^{\lambda_2}\cdots\textbf{F}_{2m}^{\lambda_{2m}}\}$ forms a basis for the $2^{2m+1}$ dimensional space of all $n \times n$ matrices over $\mathbb{R}$. We choose the matrices from this set to be weight matrices of our STBC. For $n = 4$, the following matrices (not necessarily unique), which are obtainable from the unitary matrix representations of Clifford algebra \cite{TiH}, are the 4 pairwise anticommuting matrices.
\begin{equation*}
\textbf{F}_1 = \left[ \begin{array}{cccc}
j & 0  &  0  &  0  \\
0 & -j &  0  &  0  \\
0 & 0  & -j  &  0  \\
0 & 0  &  0  &  j  \\
\end{array} \right], ~\textbf{F}_2 = \left[ \begin{array}{cccc}
     0  &   1  &   0  &   0\\
    -1  &   0  &   0  &   0\\
     0  &   0  &   0   &  1\\
     0  &   0  &  -1  &   0\\
\end{array} \right],
\end{equation*}

\begin{equation*}
\textbf{F}_3 = \left[ \begin{array}{cccc}
        0  &  j  &  0  &  0 \\
        j  &  0  &  0  &  0 \\
        0  &  0  &  0  &  j\\
        0  &  0  &  j  &  0  \\
 \end{array} \right],~\textbf{F}_4 = \left[ \begin{array}{cccc}
     0  &   0  &   1  &   0\\
     0  &   0  &   0  &  -1\\
    -1  &   0  &   0  &   0\\
     0  &   1  &   0  &   0\\
\end{array} \right].
\end{equation*}

So, $\mathcal{X} \triangleq \{\textbf{F}_1^{\lambda_1}\textbf{F}_2^{\lambda_2}\textbf{F}_{3}^{\lambda_{3}}\textbf{F}_{4}^{\lambda_{4}}, j\textbf{F}_1^{\lambda_1}\textbf{F}_2^{\lambda_2}\textbf{F}_{3}^{\lambda_{3}}\textbf{F}_{4}^{\lambda_{4}}\}$, $\lambda_i$ $\in$ $\{0,1\}$, $i = 1,2,3,4$, is the linearly independent (over $\mathbb{R}$) set of weight matrices.
Since we want reduced ML-decoding as well, the appropriate ordering of weight matrices is important. To illustrate with an example, the Silver code \cite{PGA} has 8 weight matrices corresponding to 8 real symbols (or 4 complex symbols), among which the first 4 are the weight matrices of the Alamouti code. Hence, when the last four real symbols are fixed, the first four symbols can be independently decoded. This would not have been achievable if the weight matrices were randomly allocated. So, for 4 transmit antennas, to construct a full-rate code with reduced ML-decoding complexity, we first need to construct a low decoding complexity code using some of the weight matrices from $\mathcal{X}$. For 4 transmit antennas, the best multi-group decodable code is the rate-1, single-complex symbol decodable (SSD) code which has been extensively studied in literature and is known in many forms - CIOD \cite{ZS}, MDCQOD \cite{YGT}, CUW-SSD code \cite{sanjay}. It is to be noted that all these codes have the same coding gain but different weight matrices. For our construction of a full-rate STBC, we make use of the CIOD. The codeword of the CIOD is as follows.

{\footnotesize
\begin{eqnarray}\label{ciod}
\textbf{S}_1^{ciod}(s_1,\cdots,s_4)& = & s_{1I}(\textbf{I}_4 - \textbf{F}_1\textbf{F}_2\textbf{F}_3)  + s_{1Q}(\textbf{F}_1 - \textbf{F}_2\textbf{F}_3) {}
\nonumber\\
&&{} +  s_{2I}(\textbf{F}_1\textbf{F}_3 - \textbf{F}_2)  + s_{2Q}(\textbf{F}_3 - \textbf{F}_1\textbf{F}_2) {}
\nonumber\\
&&{} + s_{3I}(\textbf{I}_4 + \textbf{F}_1\textbf{F}_2\textbf{F}_3) + s_{3Q}(\textbf{F}_1 + \textbf{F}_2\textbf{F}_3) {}
\nonumber\\
&&{} + s_{4I}(-\textbf{F}_2 - \textbf{F}_1\textbf{F}_3) + s_{4Q}(\textbf{F}_3 + \textbf{F}_1\textbf{F}_2).
\end{eqnarray}
}
In \eqref{ciod}, the symbols take values from a QAM constellation which is rotated by an angle of $(1/2)tan^{-1}2$ rad. This angle maximizes the coding gain for the code \cite{ZS}.
We can obtain a full-rate STBC for 2 receive antennas by obtaining 8 more weight matrices on post-multiplication of the weight matrices of the CIOD by $\textbf{F}_4$. This does not spoil the linear independence of the resulting set of weight matrices, which is evident from Lemma \ref{lemma_1}. So, the resulting rate-2 code has the codeword matrix as follows.
\begin{equation*}
 \textbf{S}_{2}(s_1,\cdots,s_8) = \textbf{S}_{1}^{ciod}(s_1,\cdots,s_4) +  \textbf{S}_{1}^{\prime}(s_5,\cdots,s_8)
\end{equation*}
where,
{\footnotesize
\begin{eqnarray}\label{code2}
\textbf{S}_1^{\prime}(s_5,\cdots,s_8)& =  & \textbf{S}_1^{ciod}(s_5,\cdots,s_8)\textbf{F}_4\\
& = & s_{5I}(\textbf{F}_4 - \textbf{F}_1\textbf{F}_2\textbf{F}_3\textbf{F}_4)  + s_{5Q}(\textbf{F}_1\textbf{F}_4 - \textbf{F}_2\textbf{F}_3\textbf{F}_4) {}
\nonumber\\
&&{} +  s_{6I}(\textbf{F}_1\textbf{F}_3\textbf{F}_4 - \textbf{F}_2\textbf{F}_4)  + s_{6Q}(\textbf{F}_3\textbf{F}_4 - \textbf{F}_1\textbf{F}_2\textbf{F}_4) {}
\nonumber\\
&&{} + s_{7I}(\textbf{F}_4 + \textbf{F}_1\textbf{F}_2\textbf{F}_3\textbf{F}_4) + s_{7Q}(\textbf{F}_1\textbf{F}_4 + \textbf{F}_2\textbf{F}_3\textbf{F}_4) {}
\nonumber\\
&&{} + s_{8I}(-\textbf{F}_2\textbf{F}_4 - \textbf{F}_1\textbf{F}_3\textbf{F}_4) {}
\nonumber\\
&&{} + s_{8Q}(\textbf{F}_3\textbf{F}_4 + \textbf{F}_1\textbf{F}_2\textbf{F}_4),
\end{eqnarray}
}

Note that the code whose codeword matrix is shown in \eqref{code2} is also SSD, as its weight matrices are obtained by post-multiplying the weight matrices of the CIOD by a unitary matrix, which, in this case is $\textbf{F}_4$. The rate-2 code described above does not have full-diversity. Its performance can be enhanced by using a complex scalar which results in the following codeword matrix.
\begin{equation}\label{rate2}
 \textbf{S}_{2}(s_1,\cdots,s_8) = \textbf{S}_{1}^{ciod}(s_1,\cdots,s_4) +  e^{j\pi/4}\textbf{S}_{1}^{\prime}(s_5,\cdots,s_8).
\end{equation}

In \eqref{rate2}, the use of the complex scalar $e^{j\pi/4}$ makes the code have full-diversity with a high coding gain. The value of the minimum determinant \cite{TSC} obtained for this code is 10.24 for 4-/16-QAM and this was verified by exhaustive computer search. The rate-2 code described above has the same coding gain and ML-decoding complexity as the one presented in \cite{SrR_arxiv}.

To obtain a full-rate for 3 and 4 receive antennas, we need to obtain the remaining 8 and 16 weight matrices by multiplying the weight matrices of the CIOD and the rate-2 code whose codeword matrix is given in \eqref{rate2}, respectively, by $j$. Note from Lemma \ref{lemma_1} that the above operation does not spoil the linear independence of the resulting set of weight matrices over $\mathbb{R}$. Hence, the codeword of a rate-3 code is given as follows.
\begin{equation}\label{rate3}
 \textbf{S}_{3}(s_1,\cdots,s_{12}) = \textbf{S}_{2}(s_1,\cdots,s_8) + j\textbf{S}_1^{ciod}(s_9,\cdots,s_{12})
\end{equation}
where, $\textbf{S}_{2}(s_1,\cdots,s_8)$ is given by \eqref{rate2}. The codeword matrix of a full-rate STBC for $n_r \geq 4$ is as follows.
\begin{equation}\label{rate4}
 \textbf{S}_{4}(s_1,\cdots,s_{16}) = \textbf{S}_{2}(s_1,\cdots,s_8) + j\textbf{S}_2(s_9,\cdots,s_{16}).
\end{equation}
The full rate code for $n_r \geq 4$ is given below.
 \begin{eqnarray}\label{con2}
 \textbf{S}_{4}(s_1,\cdots,s_{16})& = &\textbf{S}_{1}^{ciod}(s_1,\cdots,s_4) {}
\nonumber\\
&&{} + e^{j\pi/4}\textbf{S}_{1}^{ciod}(s_5,\cdots,s_8)\textbf{F}_4 {}
\nonumber\\
&&{} + j\textbf{S}_{1}^{ciod}(s_9,\cdots,s_{12}) {}
\nonumber\\
&&{} + je^{j\pi/4} \textbf{S}_{1}^{ciod}(s_{13},\cdots,s_{16})\textbf{F}_4.
\end{eqnarray}

Note from \eqref{con2} that the codeword matrix of the full-rate code is obtained from independent codeword matrices of 4 separate SSD codes. This property will be exploited in the next section to achieve reduced ML-decoding complexity.

\subsection{Performance of our code}\label{subsec1}
The rate-2 code whose codeword matrix is given in \eqref{rate2} has a minimum determinant of 10.24 for 4-QAM. The corresponding minimum determinant of the punctured Perfect code is 3.6304. The minimum determinants of both the codes have been calculated for 4-QAM with the average codeword energy being 16 units, i.e, $\mathcal{E} \Vert \textbf{S} \Vert^2 = n_tT$. As a result of a higher minimum determinant and hence a better coding gain, our rate-2 code is expected to perform better than the punctured Perfect code.

For 3 receive antennas, our rate-3 code whose codeword matrix is given in \eqref{rate3} does not offer full-diversity. It can be noted that the rate-3 code is obtained by multiplexing a full-diversity rate-2 code and a full-diversity rate-1 code. In other words, each codeword matrix of our code has two individual sub-codeword matrices - one sub-codeword matrix belonging to the full-diversity rate-2 code and the other belonging to the full-diversity rate-1 code. We say that our rate-3 code has two {\it embedded} full-diversity codes in it. Hence, though the rate-3 code may not have full-diversity, meaning which its minimum determinant is zero, the number of codeword difference matrices which are not full-ranked is lesser than it would be if the rate-3 code were constructed using arbitrary weight matrices. This is because there are many instances when two codewords of the rate-3 code are such that their codeword difference matrix is the same as one of the codeword difference matrices of one of the embedded full-diversity codes (This happens when the two codewords of the rate-3 code have a common sub-codeword matrix). Further, even if the two codewords of the rate-3 code do not have a common sub-codeword matrix, their difference matrix might still be full-ranked. Hence, in comparison to the number of codeword difference matrices of the rate-3 code, the number of non full-ranked codeword difference matrices is very small. A similar arguement can be done for the rate-4 code, which is full-rate for $n_r \geq 4$ and can be seen from \eqref{rate4} to have two embedded full-diversity rate-2 codes. Hence, we expect the rate-3 and the rate-4 codes to perform very well atleast in the low and medium SNR range. Simulation results presented later confirm our expectations.

{\footnotesize
\begin{figure*}
\begin{equation*}
\textbf{D}_{Perfect} =  \left[ \begin{array}{cccccccc}
x & 0 & x & 0 & x & 0 & x & 0 \\
0 & x & 0 & x & 0 & x & 0 & x \\
0 & 0 & x & 0 & x & 0 & x & 0 \\
0 & 0 & 0 & x & 0 & x & 0 & x \\
0 & 0 & 0 & 0 & x & 0 & x & 0 \\
0 & 0 & 0 & 0 & 0 & x & 0 & x  \\
0 & 0 & 0 & 0 & 0 & 0 & x & 0  \\
0 & 0 & 0 & 0 & 0 & 0 & 0 & x \\
\end{array} \right], ~~~~~\textbf{D}_{EAST}=  \left[ \begin{array}{cccccccc}
x & 0 & x & 0 & 0 & 0 & 0 & 0 \\
0 & x & 0 & x & 0 & 0 & 0 & 0 \\
0 & 0 & x & 0 & 0 & 0 & 0 & 0 \\
0 & 0 & 0 & x & 0 & 0 & 0 & 0 \\
0 & 0 & 0 & 0 & x & 0 & x & 0 \\
0 & 0 & 0 & 0 & 0 & x & 0 & x  \\
0 & 0 & 0 & 0 & 0 & 0 & x & 0  \\
0 & 0 & 0 & 0 & 0 & 0 & 0 & x \\
\end{array} \right].
\end{equation*}\hrule
\end{figure*}
}

\section{ML-Decoding Complexity and Ergodic capacity} \label{sec4}
The ML-decoding complexity of a code depends on the weight matrices of the code \cite{SrR_arxiv}. Our proposed design is such that for any number of receive antennas, reduced ML-decoding can be achieved. To see this, our code whose codeword matrix is as shown in \eqref{con2} consists of 4 multiplexed rate-1 SSD codes. This means that for $n_{min}= min(4,n_r)$, $n_{min}$ SSD codes can be multiplexed so that the code rate is $n_{min}$ complex symbols per channel use. So, for any $n_{min}$, one can fix the last $4(n_{min}-1)$ symbols and decode the first 4 symbols independently (with an additional complexity increase by a factor of only  $M$). Thus, for any number of receive antennas, the worst case ML-decoding complexity is of the order of $M^{4(n_{min}-1)+1}$. This results in a reduction in ML-decoding complexity by a factor of $M^3$ with respect to the Perfect code for general constellations.

The channel can be modelled as follows (note that $n_t = T =4$ at all places below).
\begin{equation*}
 \widetilde{vec(\textbf{Y})} = \sqrt{\frac{SNR}{n_t}}\textbf{H}_{eq}\tilde{\textbf{s}} + \widetilde{vec(\textbf{N})},
\end{equation*}
\noindent where $\textbf{H}_{eq} \in \mathbb{R}^{2n_rT\times2n_{min}T}$ is given by
\begin{equation*}
 \textbf{H}_{eq} = \left(\textbf{I}_T \otimes \check{\textbf{H}}\right)\textbf{G},
\end{equation*}
with $\textbf{G} \in \mathbb{R}^{2n_tT\times 2n_{min}T}$ being the generator matrix as in Def. \ref{def4}, so that $\widetilde{vec\left(\textbf{S}\right)} = \textbf{G} \tilde{\textbf{s}}.$ and
\begin{equation*}
\tilde{\textbf{s}} \triangleq [s_{1I},s_{1Q},\cdots,s_{(n_{min}T)I},s_{(n_{min}T)Q}]^\dagger.
\end{equation*}

When rotated QAM constellation is employed for our code, with the angle of rotation being $\theta = (1/2)tan^{-1}2$,
\begin{equation*}
 \tilde{\textbf{s}} = \textbf{F}\tilde{\textbf{x}}.
\end{equation*}
\noindent Here $\textbf{F} = \textbf{I}_{n_{min}T} \otimes \textbf{J}$, with
\begin{equation*}
 \textbf{J} \triangleq \left[\begin{array}{cc}
                     cos(\theta) & -sin(\theta)\\
                     sin(\theta) &  cos(\theta)\\
              \end{array}\right],
\end{equation*}
$\textbf{x} \triangleq [x_{1},\cdots,x_{n_{min}T}]^\dagger$, and $x_i$, $i = 1,\cdots,n_{min}T$, take values from a QAM constellation.

The ML decoding metric can now be written as
\begin{equation*}
 \textbf{M}\left(\tilde{\textbf{x}}\right) = \Vert \widetilde{vec\left(\textbf{Y}\right)} - \sqrt{\frac{SNR}{n_t}}\textbf{H}_{eq}\textbf{F}\tilde{\textbf{x}}\Vert^2.= \Vert \textbf{y}^\prime - \sqrt{\frac{SNR}{n_t}}\textbf{R}\tilde{\textbf{x}}\Vert^2,
\end{equation*}
where, $\textbf{y}^\prime = \textbf{Q}^\dagger \widetilde{vec\textbf{(Y)}}$, and on QR-decomposition, $\textbf{H}_{eq}\textbf{F} = \textbf{QR}$, $\textbf{R} \in \mathbb{R}^{2n_rT \times 2n_{min}T}$. Specifically for our design, for any number of receive antennas, the $\textbf{R}$- matrix has the following structure.
\begin{equation}\label{rmatrix}
 \textbf{R} =  \left[\begin{array}{cccc}
\textbf{D} &  \textbf{X} &  \ldots &  \textbf{X} \\
\textbf{O}_8 &  \textbf{D} &  \ldots &  \textbf{X} \\
\vdots &   \ddots  & \ddots & \vdots \\
\textbf{O}_8 & \textbf{O}_8 & \ldots & \textbf{D} \\
\end{array} \right]
\end{equation}
where $\textbf{X} \in \mathbb{R}^{8 \times 8}$ is random non-sparse matrix whose entries depend on the channel coefficients and $\textbf{D} \in \mathbb{R}^{8 \times 8}$ has the following structure.

{\footnotesize
\begin{equation*}
\textbf{D} =  \left[ \begin{array}{cccccccc}
x & x & 0 & 0 & 0 & 0 & 0 & 0 \\
0 & x & 0 & 0 & 0 & 0 & 0 & 0 \\
0 & 0 & x & x & 0 & 0 & 0 & 0 \\
0 & 0 & 0 & x & 0 & 0 & 0 & 0 \\
0 & 0 & 0 & 0 & x & x & 0 & 0 \\
0 & 0 & 0 & 0 & 0 & x & 0 & 0  \\
0 & 0 & 0 & 0 & 0 & 0 & x & x  \\
0 & 0 & 0 & 0 & 0 & 0 & 0 & x \\
\end{array} \right].
\end{equation*}
}
Here, $x$ represents a non-zero entry. The structure of the above $\textbf{R}$-matrix is due to the fact that the codeword matrix of our code is comprised of multiplexed SSD-codeword matrices. Note that for the CIOD, the entanglement between the real and the imaginary parts of a symbol is due to the constellation rotation, which is employed for full-diversity. This can also be checked by direct computation. Correspondingly, for the Perfect code, the matrix $\textbf{D}$ in \eqref{rmatrix} is as shown at the top of the next page.

\begin{table*}
\begin{center}
\begin{tabular}{|l|l|l|l|l|} \hline
 \multirow{3}{*}{$\textbf{n}_r$ (\textbf{Rx Antennas})} & \multirow{3}{*}{\textbf{code}} & \multirow{3}{*}{\textbf{Min. Determinant}} & \multicolumn{2}{c|}{\textbf{ML Decoding complexity order}}  \\ \cline{4-5}
 & &  &  \textbf{ square QAM} & \textbf{Non-rectangular}  \\
& & & & \textbf{QAM} \\ \hline
\multirow{4}{*}{2} &  DjABBA  &  0.64 &  $M^6$    &  $M^8$ \\
 &  Punctured perfect code  &  3.6304 &  $M^5\sqrt{M}$    &  $M^8$ \\
 &  EAST code \cite{barry} &  10.24 &  $M^4\sqrt{M}$    &  $M^6$ \\
 &  The proposed code  &  10.24 &  $M^4\sqrt{M}$    &  $M^5$ \\ \hline
\multirow{2}{*}{3} &  Punctured perfect code  & 0.7171  &  $M^9\sqrt{M}$     &  $M^{12}$ \\
 &  The proposed code  &  0  &  $M^8\sqrt{M}$     &  $M^9$ \\ \hline
\multirow{2}{*}{4} &  Perfect code  &  0.2269  &  $M^{13}\sqrt{M}$     &  $M^{16}$ \\
 &  The proposed code  &  0  &  $M^{12}\sqrt{M}$     &  $M^{13}$ \\ \hline
\end{tabular}
\caption{Comparison of the codes for 4 transmit antennas}
\label{table2}
\end{center}
\end{table*}

Clearly, the $\textbf{R}$-matrix for our code has more zero entries than the $\textbf{R}$-matrix of the Perfect code. This means that the interference between symbols is lesser for our code than for the Perfect code. A consequence of this is that when QAM constellations are employed, the average ML-decoding complexity using a sphere decoder \cite{ViB} is much lesser than the worst case ML-decoding complexity of $M^{4(n_{min}-1)+1}$. Note that the worst case ML-decoding complexity of our code is lower than that of the Perfect code by a factor of $M^3$ only for non-rectangular QAM constellations. But for square-QAM constellations of size $M$, where the real and the imaginary parts of a signal point can be independently decoded, the ML-decoding of our code can be reduced further by a factor of $\sqrt{M}$ (from $M^{4(n_{min}-1)+1}$ to $M^{4(n_{min}-1)}\sqrt{M}$) by quantizing (the details are presented in \cite{SrR_arxiv}) and the decoding complexity of the Perfect code can be reduced by a factor of $M^2\sqrt{M}$ (from $M^{4n_{min}}$ to $M^{(4n_{min}-3)}\sqrt{M}$). This is achieved by noting from the $\textbf{R}$-matrix structure for the Perfect code that the real parts of the symbols $s_1,s_2,s_3$ and $s_4$ can be independently decoded from imaginary parts for square QAM constellations and this reduces the complexity by a factor of $M^2$ and using quantizing further reduces the complexity by a factor of $\sqrt{M}$. Table \ref{table2} summarizes these facts.
We have also used the EAST code \cite{barry} for $4 \times 2$ MIMO for comparison with our code. The EAST code, which is full-rate of $n_r=2$ has the following $\textbf{R}$-matrix structure.
\begin{equation*}
 \textbf{R} =  \left[\begin{array}{cc}
\textbf{D} &  \textbf{X}  \\
\textbf{O}_8 &  \textbf{D}  \\
\end{array} \right],
\end{equation*}
with $\textbf{D}$ having the structure shown at the top of this page. Clearly, its worst case ML-decoding complexity order can be as low as $M^4\sqrt{M}$ for square-QAM and $M^6$ for non-rectangular QAM constellations.

It was shown in \cite{HTW} that a reduction in interference between symbols leads to a better mutual information. The ergodic capacity with the use of a space time code is given as follows \cite{JJK}.
\begin{equation}
 \mathcal{C} = \frac{1}{2T}\mathcal{E}_{\textbf{H}}log[det(\textbf{I}_{2n_rT} + \frac{SNR}{n_t}\textbf{H}_{eq}\textbf{H}_{eq}^\dagger)]
\end{equation}
Since our code has lesser interference between symbols than the Perfect code, as is evident from the $\textbf{R}$-matrix structure (hence, this is reflected in $\textbf{H}_{eq}$), its ergodic capacity is expected to be better for 2 and 3 receive antennas and this is confirmed in Fig. \ref{fig_cap}. For, $n_r \geq 4$, it can be checked that the generator matrix for our code is unitary, like that of the Perfect code. Hence, for $n_r \geq 4$, our code is information lossless, like the Perfect code.

\begin{figure*}
\centering
\includegraphics[width=7in,height=4in]{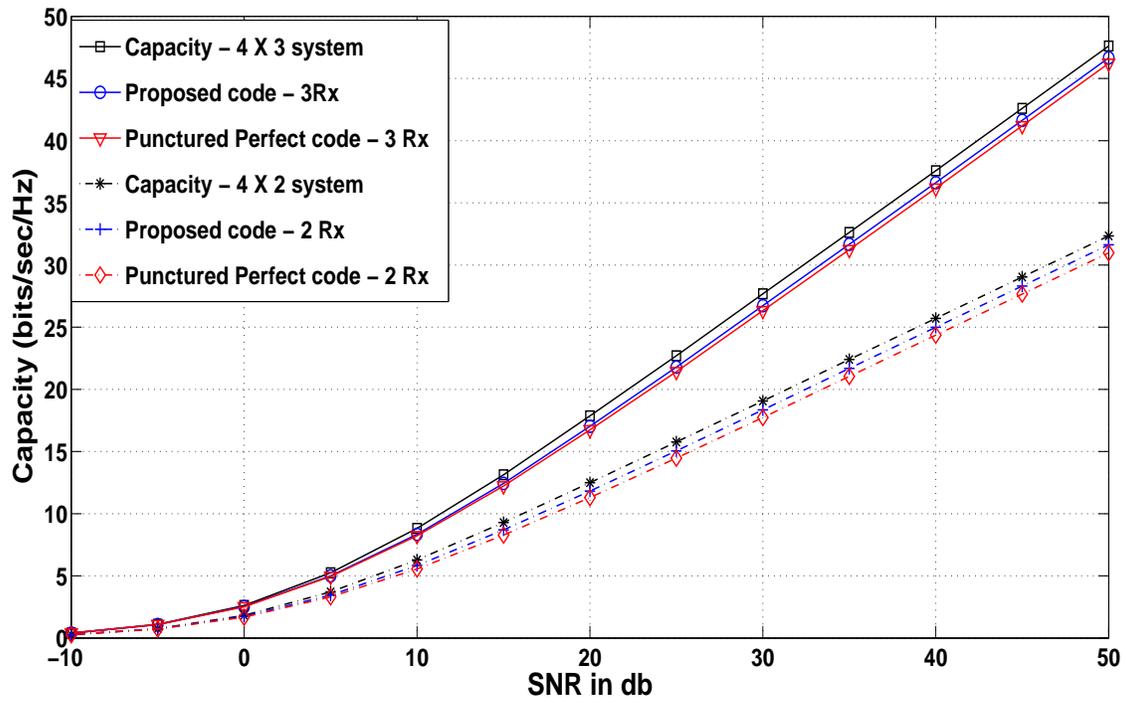}
\caption{Ergodic capacity Vs SNR for codes for $4 \times 2$ and $4 \times 3$ systems}
\label{fig_cap}
\end{figure*}

\begin{figure*}
\centering
\includegraphics[width=7in,height=4in]{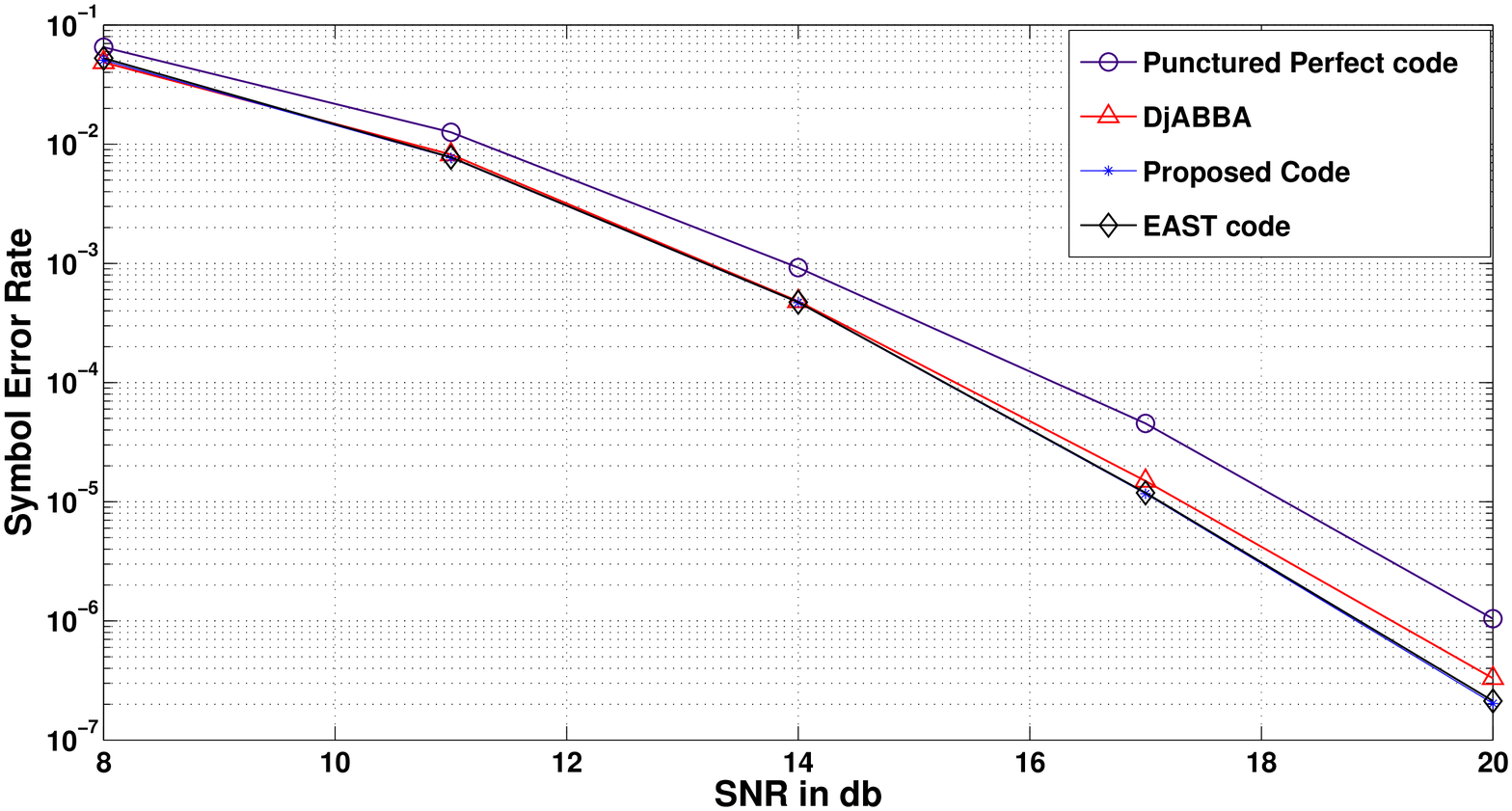}
\caption{SER performance at 4 BPCU for codes for $4 \times 2$ systems}
\label{fig1}
\end{figure*}

\begin{figure*}
\centering
\includegraphics[width=7in,height=4in]{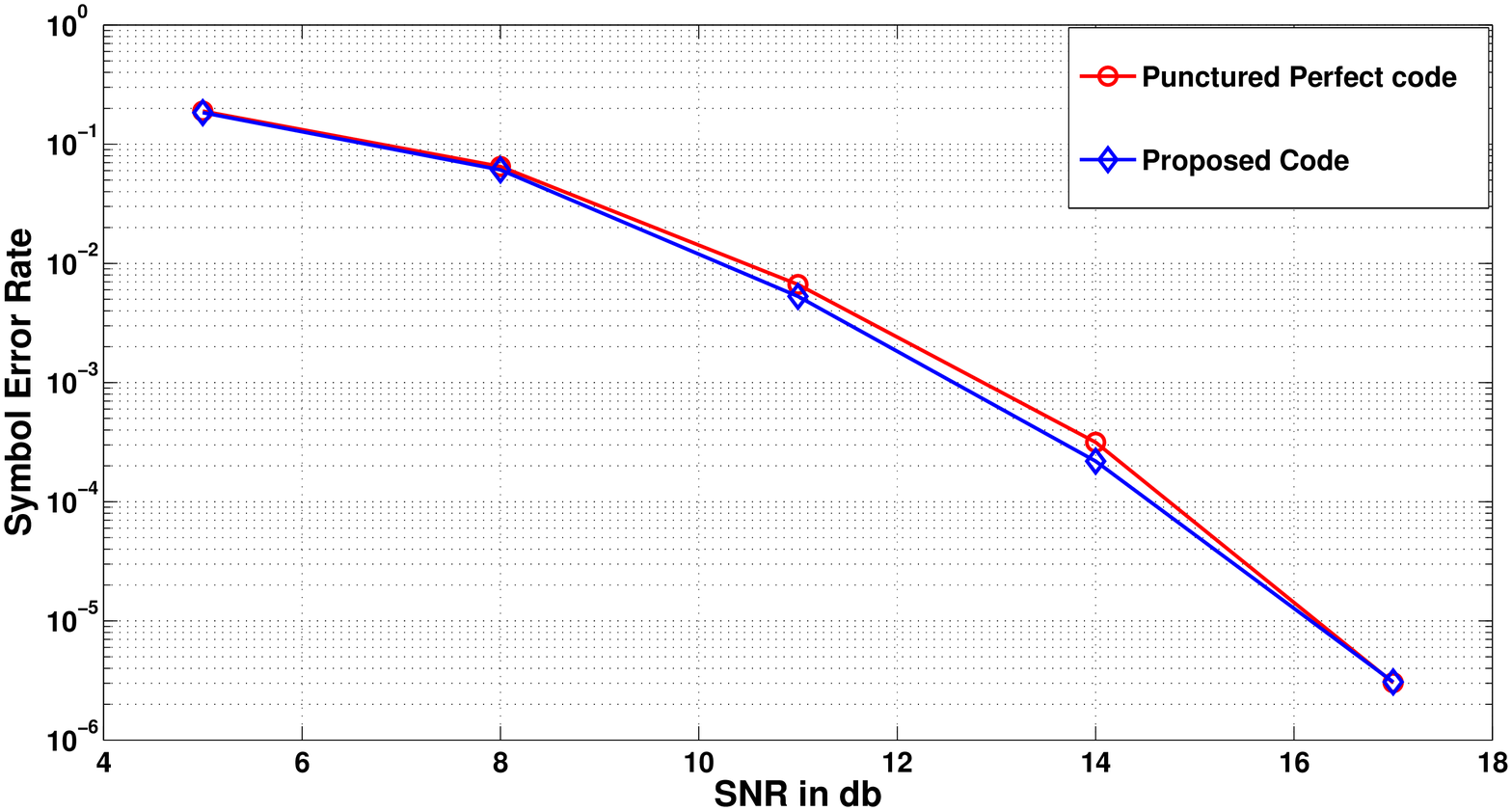}
\caption{SER performance at 6 BPCU for codes for $4 \times 3$ systems}
\label{fig2}
\end{figure*}

\begin{figure*}
\centering
\includegraphics[width=7in,height=4in]{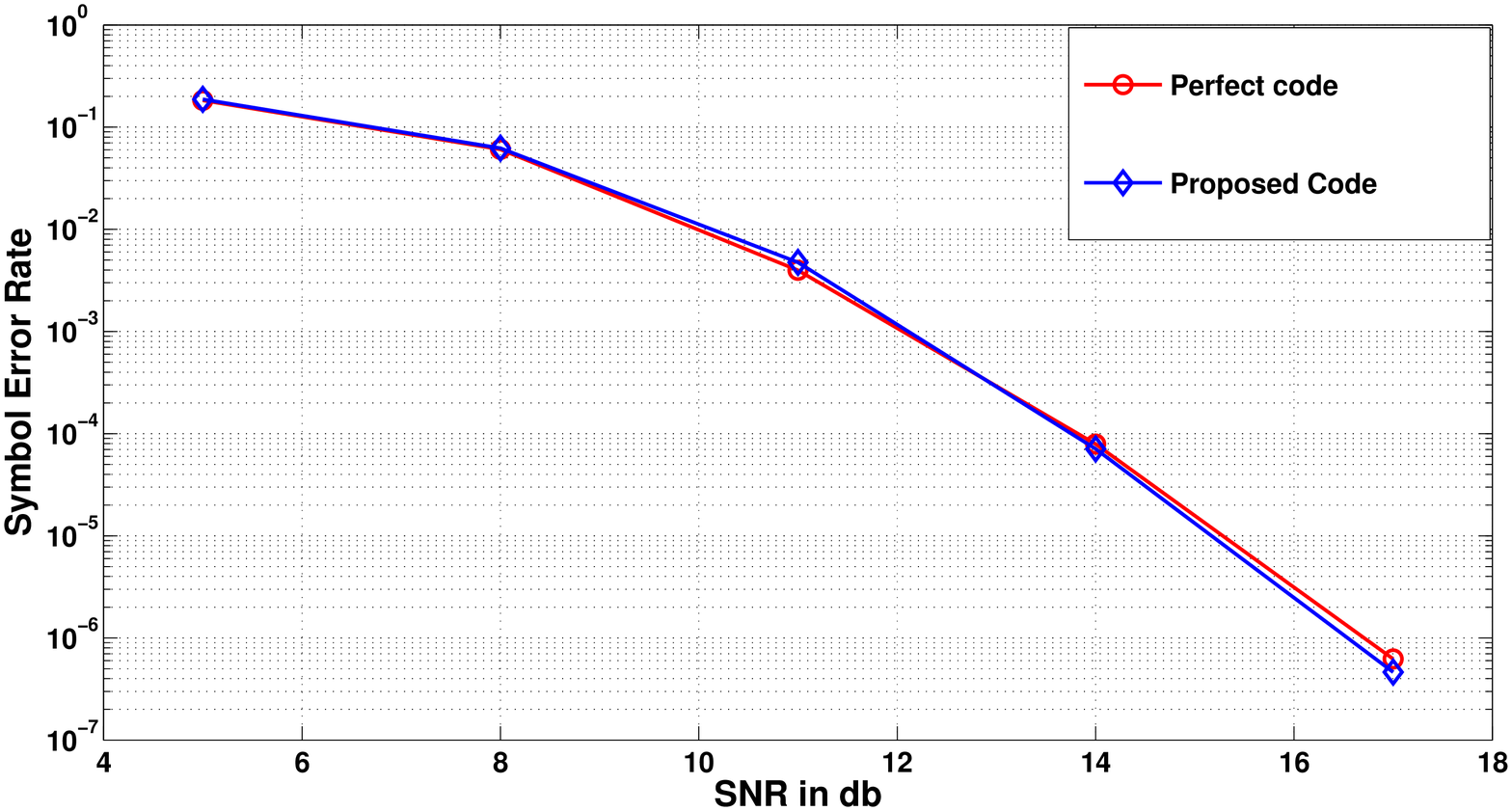}
\caption{SER performance at 8 BPCU for codes for $4 \times 4$ systems}
\label{fig3}
\end{figure*}

\section{Simulation Results} \label{sec5}
In all the simulation scenarios in this section,  we consider the Rayleigh block fading MIMO channel.
\subsection{$4 \times 2$ MIMO}
Fig. \ref{fig1} shows the plots of the symbol error rate (SER) as a function of the SNR at each receive antenna for four codes - the DjABBA code, the punctured perfect code, our code and the EAST code. Since the number of degrees of freedom of the channel is only 2, we need to use the punctured Perfect code, i.e the Perfect code with 2 of its 4 layers punctured. Our code is the one given in \eqref{rate2}. The constellation used is 4-QAM. Our code and the EAST code have the best performance. It is to be noted that the curves for our code and the EAST code coincide.
\subsection{$4 \times 3$ MIMO}
Fig. \ref{fig2} shows the plots of the symbol error rate (SER) as a function of the SNR at each receive antenna for two codes - the punctured perfect code (puncturing one of its 4 layers) and our code whose codeword is given in \eqref{rate3}. The constellation used is 4-QAM. Our punctured code has a marginally better performance than the punctured perfect code in the low to medium SNR range. As our code does not have full-diversity, at a very high
SNR, it might lose out on diversity gain.
\subsection{$4 \times 4$ MIMO}
Fig. \ref{fig3} shows the plots of the symbol error rate (SER) as a function of the SNR at each receive antenna for our code and the Perfect code. Our code nearly matches the Perfect code in performance at low and medium SNR, while at high SNR, it may lose out to the Perfect code due to the lack of full-diversity. More importantly, our code has lower ML-decoding complexity.


\section{Discussion} \label{sec6}

In this paper, we proposed a scheme to obtain a full-rate STBC for 4 transmit antennas and any number of receive antennas with reduced ML-decoding complexity. The design, although not a full-diversity code for 3 and 4 receive antennas, matches the Perfect code for 4 transmit antennas in error performance, while beating it for $4 \times 2$ MIMO systems. In terms of ergodic capacity, our proposed design has higher ergodic capacity than the punctured Perfect code for 2 and 3 receive antennas, while for $4 \times 4$ MIMO systems, it is information lossless, like the Perfect code. The scheme presented in this paper can be applied to higher number of transmit antennas to obtain similar advantages and this could provide the direction of future research.

\section*{ACKNOWLEDGEMENT}
This work was partly supported by the DRDO-IISc program on Advanced Research in Mathematical Engineering, through research grants to B. Sundar Rajan.

\end{document}